\newcommand{\idx}{I}
\DeclareMathOperator*{\profile}{profile}
\newcommand{\len}[1]{\lvert{#1}\rvert}
\newcommand{\norm}[1]{\lVert{#1}\rVert}
\newcommand{\nil}{\textsc{nil}}
\newcommand{\BDD}{\textsc{bdd}}
\newcommand{\DAG}{\textsc{dag}}
\newcommand{\BDDs}{\textsc{bdd}s}
\newcommand{\True}{{\normalfont\texttt{1}}}
\newcommand{\False}{{\normalfont\texttt{0}}}
\newcommand{\Tnode}{\top}
\newcommand{\Fnode}{\bot}
\def\pp{\mathinner{\ldotp\ldotp}} 
\def\preorder{\prec_{\text{\normalfont pre}}}
\def\postorder{\prec_{\text{\normalfont post}}}
\newcommand{\Qed}{\ensuremath{\hfill \square}}
\newcommand{\funcs}[2]{\textsc{#1}({#2})}
\title{Binary Decision Diagrams:\\ from Tree Compaction to Sampling\thanks{This work
 was partially supported by the \textsc{anr} projects \textsc{Metaconc} ANR-15-CE40-0014 and \textsc{Ping/Ack} ANR-18-CE40-0011.
 An implementation of the results is provided at \url{https://github.com/agenitrini/BDDgen}.}
	}
\author{Julien Cl\'ement\inst{1}
    \and
    Antoine Genitrini\inst{2}}
\institute{
   Normandie Univ, \textsc{unicaen}, \textsc{ensicaen}, \textsc{cnrs}, \textsc{greyc}, 14000 Caen, France
    \email{Julien.Clement@unicaen.fr}
    \and
    Sorbonne Universit\'e, \textsc{cnrs}, \textsc{lip6}, F-75005 Paris, France.
    \email{Antoine.Genitrini@lip6.fr}}
\begin{document}

\maketitle

\begin{abstract}
\small\baselineskip=9pt 
Any Boolean function corresponds with a complete full binary decision tree.
This tree can in turn be represented in a maximally compact form as a
direct acyclic graph where common subtrees are factored and shared,
keeping only one copy of each unique subtree. This yields the celebrated and
widely used structure called reduced ordered binary decision diagram (\textsc{robdd}).
We propose to revisit the classical compaction process to give a new way
of enumerating \textsc{robdd}s of a given size without considering fully expanded trees and the compaction step. Our method also provides an unranking procedure for the set of \textsc{robdd}s. As a by-product we get a random uniform and exhaustive sampler for \textsc{robdd}s for a given number of variables and size. 
\end{abstract}
%

	
\section{Introduction}
	\label{sec:intro}

\begin{wrapfigure}[15]{r}{6cm}
    \vspace*{-1.2cm}
    \begin{center}
        \includegraphics[scale=0.9]{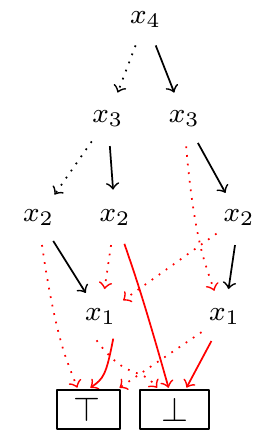}
        \hspace*{0.cm}
        \includegraphics[scale=0.85]{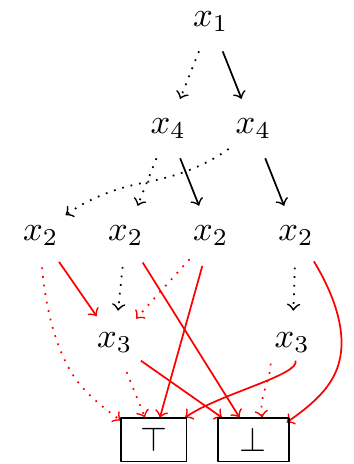}  
        \caption{\label{fig:BDDexample}
            Two Reduced Ordered Binary Decision Diagrams
            associated to the same Boolean function.
            Nodes are labeled with Boolean variables; left dotted edges 
            (resp. right solid edges) are $\False$ links (resp. $\True$ links).}
    \end{center}
\end{wrapfigure}
The representation of a Boolean function as a binary decision tree has been used
for decades. Its main benefit, compared to other representations like a truth table or 
a Boolean circuit, comes from the underlying \emph{divide-and-conquer} paradigm.
Thirty years ago a new data structure emerged, based on the compaction of binary decision tree, 
and hereafter denoted as Binary Decision Diagrams (or \textsc{bdd}s)~\cite{Bryant86}. Its takeoff
has been so spectacular that many variants of compacted structures have been developed, 
and called through many acronyms as presented in~\cite{Wegener00}.
One way to represent the different diagrams consists in their embedding 
as directed acyclic graphs (or \textsc{dag}s).
One reason for the existence of all these variants of diagrams is due to the fact
that each \textsc{dag} correspondence has its own internal agency of the nodes and thus
each representation is oriented towards a specific constraint.
For example, the case of Reduced Ordered Binary Decision Diagrams (\textsc{robdd}s)
is such that the variables do appear at most once and in the same order
along any path from the source to a sink of the \textsc{dag}, and
furthermore, no two occurrences of the same subgraph do appear in the structure. 
For such structures and others, like \textsc{qobdd}s or \textsc{zbdd}s for example,
there is a canonical representation of each Boolean function.

In his book~\cite{Knuth11} Knuth proves or recalls combinatorial results, like
properties for the profile of a \textsc{bdd}, or the way to combine two structures to represent a
more complex function.
However, one notes an unseemly fact. There
are no results about the distribution of the Boolean functions according to their \textsc{robdd} size.
In fact in contrast to (e.g.) binary trees where there is a
recursive characterization that allows to well specify the trees, we have
no local-constraint here for \textsc{robdd}s ans thus a similar recurrence is unexpected.
Very recently, there is a first study exploring experimentally, numerically,
and theoretically the typical and worst-case \textsc{robdd} sizes in~\cite{NV19}.
We aim at obtaining the same kind of combinatorial results but here we design a partition
of the decision diagrams that allows us to go much further in terms of size. 
In particular we obtain an exhaustive enumeration of the diagrams according to their size up to 9 variables.
This was unreachable through the exhaustive approach proposed in~\cite{NV19} due 
to the double exponential complexity of the problem: there are $2^{2^k}$ Boolean functions with $k$ variables.
Our \textsc{c++} implementation 
fully manages the case of $9$ variables~(see~Fig~\ref{fig:distribution}) that corresponds to $2^{512}\approx 10^{154}$ functions.
In particular for 9 Boolean variables, our implementation shows one seventh of all \textsc{robdd}s
are of size 132 (the possible sizes range from $3$ to $143$).
Furthermore, \textsc{robdd}s of size between $125$ and $143$ represents more than $99.8\%$ of all \textsc{robdd}s,
in accordance with theoretical results from~\cite{BV05,GROPL2004}. 

\begin{wrapfigure}[14]{r}{5.5cm}
    \vspace*{-1.1cm}
    \begin{center}
        \includegraphics[scale=0.3]{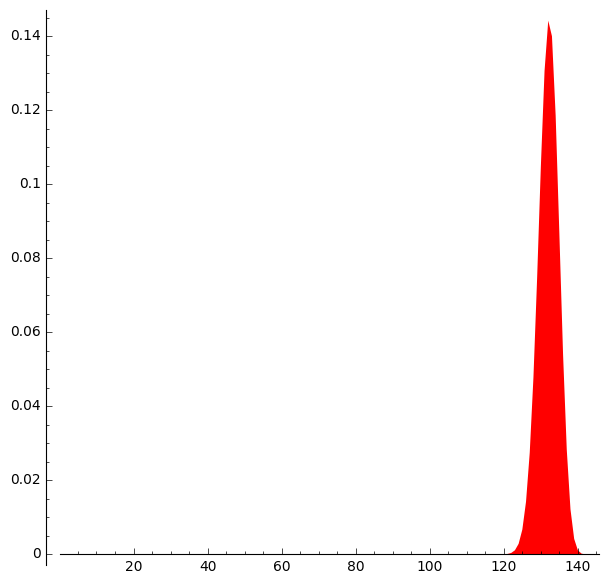} 
    \end{center}
    
    \vspace*{-2mm}
    
    \caption{Proportion of \BDDs\ over 9 variables according to their size
        \label{fig:distribution}}
\end{wrapfigure}
Starting from the well-known compaction process (that takes
a binary decision tree and outputs its compacted form, the \textsc{robdd}),
our combinatorial study gives a way of construction for \textsc{robdd}s of a given size, but 
without the compaction step. We further define a total order over the set of \textsc{robdd}s
and we propose both an unranking and an exhaustive generation algorithm. The first one
gives as a by-product a uniform random sampler for \textsc{robdd}s of a given number of variables
and size.
One strength of our approach is that it allows to sample \emph{uniformly} \textsc{robdd}s of ``small'' size,
for instance of linear size w.r.t the number $k$ of variables,  very efficiently in contrast to a naive rejection algorithm.
The usual uniform distribution on Boolean functions~\cite{BV05} yields with high probability \textsc{robdd}s 
of near maximal size of order $2^k/k$, although \textsc{robdd}s encountered in applications, when tractable, are smaller.
As a perspective, once the unranking method is well understood, and in particular the poset underlying the \textsc{robdd}s,
then we might be able to bias the distribution to sample only in a specific subclass, e.g. \textsc{robdd}s corresponding
to a particular class of formulas (e.g. read-once formulas). 



Our results have practical applications in several contexts, in particular
for testing structures and algorithms. The study given in~\cite{DHT03}
executes tests for an algorithm whose parameter is a binary decision diagram.
It is based on QuickCheck \cite{Claessen:2000:QLT}, the famous software, taking as an entry a random generator
and generating test cases for test suites. Using our uniform generator, we aim at obtaining
statistical testing, in the sense that the underlying distribution of the samples is uniform, thus
allowing to extract statistics thanks to the tests.
Another application of our approach allows to derive exhaustive testing for small structures,
like the study in~\cite{MADKR03}, that we also can conduct inside QuickCheck.

In this paper,  we focus exclusively on \textsc{robdd}s which is 
one of the first and simplest variants. Section~\ref{sec:context} introduces the combinatorics underlying the 
decision tree compaction, leading in Section~\ref{sec:decompo} to a way to unambiguously specify the structure of reduced ordered binary decision diagrams.
We apply this strategy in Section~\ref{sec:algo} and obtain an unranking algorithm for \textsc{robdd}s. 

\section{Decision diagrams as compacted trees}
	\label{sec:context}

This section defines precisely our combinatorial context. Many definitions
are detailed in the monograph of Wegener~\cite{Wegener00} 
and in the dedicated volume~\cite{Knuth11} of Knuth.

\begin{wrapfigure}[14]{r}{6.5cm}
    \vspace*{-1cm}
    \begin{center}
        \includegraphics[scale=0.45]{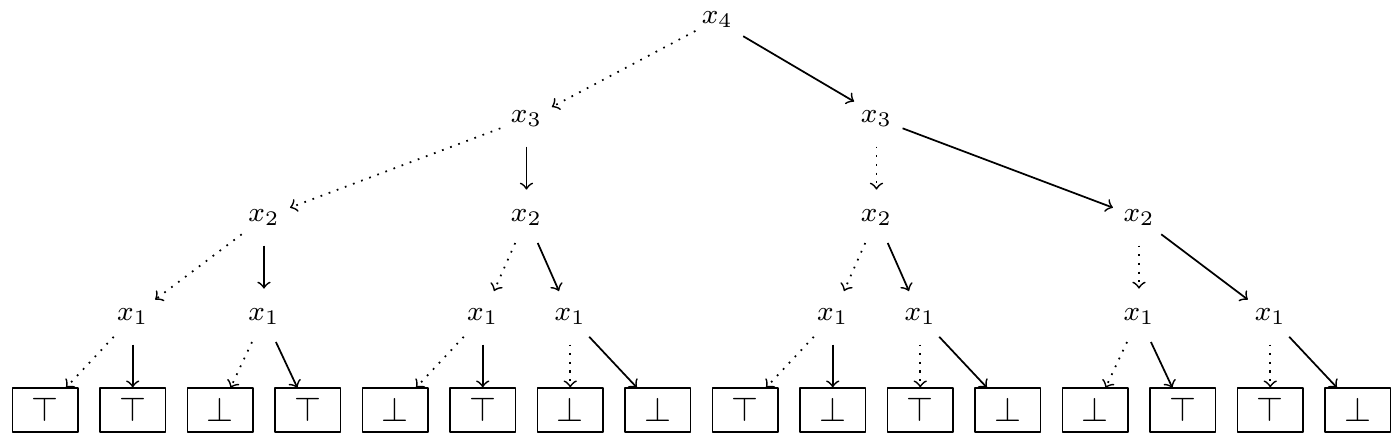}
        
        \vspace*{2mm}
        
        \includegraphics[scale=0.57]{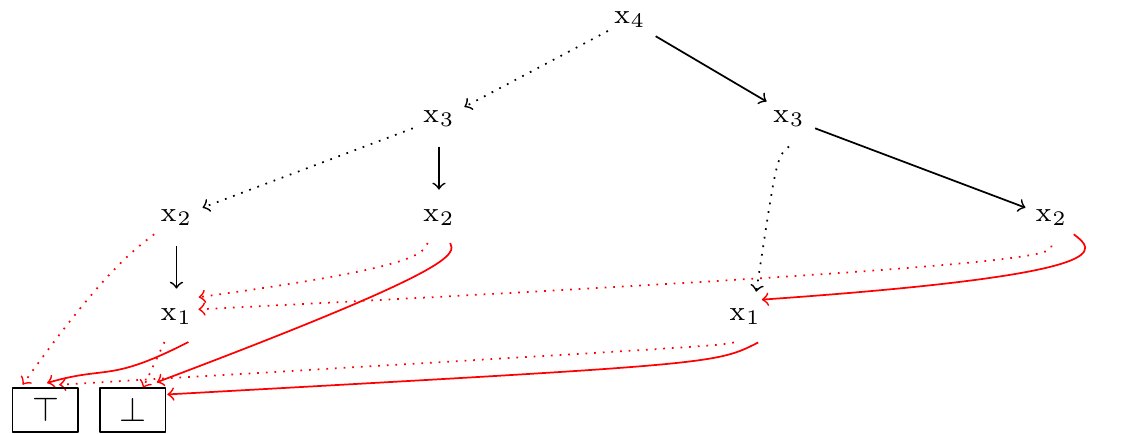}
        
        \vspace*{-2mm}
        
        \caption{\label{fig:example}
            A decision tree and its postorder compaction}
    \end{center}
\end{wrapfigure}
In this section we first recall a one-to-one correspondence between the representation of a Boolean function
as a binary decision tree (built on a specific variable ordering and seen as a \emph{plane tree}, i.e. 
the children of an internal nodes are ordered)
and a reduced ordered binary decision diagram (\textsc{robdd}s) also seen as a plane structure.
This approach is \emph{non-classical} in the context of \textsc{bdd}s, but it allows the formalization of
an equivalence relationship on \textsc{robdd}s that is the key of our enumeration:
in fact our approach foundation relies on breaking down the symmetry in \textsc{robdd}s.
We consider Boolean function on $k$ variables. We recall there are $2^{2^k}$ such Boolean functions.
The compaction process is now formalized.

\par\smallskip
\noindent\textbf{Compaction and plane decision diagrams}
A Boolean function can be represented thanks to a binary decision diagram, which is a rooted, directed, acyclic graph,
which consists of decision nodes and terminal nodes. 
There are two types of terminal nodes $\Tnode$ and $\Fnode$ corresponding
to truth values (resp. $\True$ and $\False$ ).
Each decision node $\nu$ is labeled by a Boolean variable $x_\nu$ and has
two child nodes (called \emph {low child} and \emph{high child}).
The edge from node $\nu$ to a low (or high) child represents
an assignment of $x_\nu$ to $\False$ 
and is represented as a dotted line (respectively $\True$,
represented as a solid line).
In the following we represent \textsc{robdd}s and decision trees (or \textsc{bdd}s in general) as \emph{plane structures},
 i.e. for a node we consider its low child to be its left child and the high child to be its right child.

In a (plane) full binary decision tree, no subtree is shared. By contrast we may decrease 
the number of decision nodes by factoring and sharing common substructures.
Representing a function with its full decision tree is not space efficient.
In Fig.~\ref{fig:example} we depict, on top, a decision tree of a Boolean function on 4 variables.
In the bottom of the figure we represent the compaction of the latter decision tree by using the classical common subexpression recognition notion (cf. e.g.~\cite{FSS90,GGKW20}) based on a postorder traversal of the tree. 

\begin{definition}[Compaction]
Let $T$ be the (plane) binary decision tree of a function $f$.
The \DAG~$T$ is modified through a postorder traversal. When the node $\nu$ is under visit, $\nu$ being a child of a node $\rho$.
If an identical subtree than $T_\nu$, the one rooted in $\nu$, has already be seen during the traversal, rooted in a node $\mu$, then $T_\nu$ is removed from $T$ and the node $\rho$ gets a pointer to $\mu$ (replacing the edge to $\nu$). Once $T$ has been traversed, the resulting \DAG~is the plane \textsc{robdd} of $f$.
\end{definition}
In our figures of \textsc{robdd}s we draw the pointers in red (there is an exception for the edges to the terminal
nodes as we remark in Fig.~\ref{fig:example} also drawn in red).

In a classical setting, \textsc{robdd}s are obtained by applying repetitively reduction rules
(well detailed in~\cite{Wegener00}) to \textsc{obdd}s, and the process is confluent.
Our approach conceptually takes as a starting point a full decision tree with a given ordering on variables
(meaning all nodes at the same level are labeled by the same variable) and applies the compaction rules by examining nodes of the tree in postorder.

For example the plane \textsc{robdd} in Fig.~\ref{fig:example} corresponds to the leftmost \textsc{robdd}
depicted (in the classical way) in Fig.\ref{fig:BDDexample}.
Note that for a given Boolean function, using two distinct variable orderings
can lead to two \textsc{robdd}s of different sizes (see Fig.~\ref{fig:BDDexample} for such a situation).
Nonetheless, an ordering of the variables being fixed, each Boolean function is represented
by exactly one single \textsc{robdd} obtained through the compaction of its decision tree for this order.

\par\medskip
\noindent \emph{In the rest of the paper, we consider only plane \textsc{robdd}s.
From now we thus call them \textsc{bdd}s. We also assume the set of variables $X=\{\ldots > x_k> \ldots> x_1\}$ is totally ordered.}
\medskip

Our first goal aims at giving an effective method to enumerate \BDDs\ with
a chosen number $k$ of variables and size $n$.
A first naive approach is:
(1) enumerate all the  $2^{2^k}$ Boolean functions by construction of the decision trees;
(2) apply the compaction procedure;
and (3) finally filter the \BDDs\ of size equal to the target size $n$.
This algorithm ceases to be practical for $k$ larger than $4$ (see~\cite{NV19}).

In this paper, we propose a new combinatorial description of \BDDs\ providing 
the basis for an enumeration algorithm avoiding the enumeration 
of all Boolean functions on $k$ variables.

\section{Recursive decomposition}
	\label{sec:decompo}

This section introduces a canonical and unambiguous decomposition
of the \BDDs\ yielding a recursive algorithm for their enumeration.

\par\smallskip
\noindent\textbf{Automaton point of view}
Let us introduce an equivalent representation for a \BDD.
A \BDD~can indeed be described as a deterministic finite automaton with additional
constraints and properties. This point of view gives a convenient formal characterization of
the decomposition of \BDDs\ used in our algorithms.
\begin{definition}[BDD as an automaton]
\label{def:automaton}
    A \BDD\ $B$ of index $k$ is a tuple $(Q, I, r, \delta)$ where
    \begin{itemize}[topsep=0pt]
        \item
        $Q$ is the set of nodes of the \BDD. $Q$ contains two special sink nodes $\Fnode$ and~$\Tnode$.
        \item
        $I: Q \to \{0, \dots, k\}$ is the index function which associates with every node its index.
        By convention the index of both sink nodes is $0$.
        \item
        $r\in Q$ is the root and has index $I(r)=k$.
        \item
        $\delta: Q\setminus\{\Fnode, \Tnode\} \times \{\False, \True\} \to Q$ is the full transition function.
    \end{itemize}
    There are constraints on $\delta$ translating the classical ones of the \BDDs:
    \begin{itemize}[topsep=0pt]
        \item for any node $\nu \in Q\setminus\{\Fnode, \Tnode\}$,
            $\delta(\nu, \False) \neq \delta(\nu, \True)$.
        \item for any distinct nodes $\mu$ and $\nu$ with the same index, we have 
            $\delta(\mu, \False) \neq \delta(\nu, \False)$ or $\delta(\mu, \True) \neq \delta(\nu, \True)$.
        \item the graph underlying $\delta$ forms a \DAG\ with a unique node of in-degree 0, the root $r$.
        \item if $\tau = \delta(\nu, \alpha)$ for some $\alpha \in \{\False, \True\}$
            then $I(\tau) < I(\nu)$.
    \end{itemize}
    We say $\tau$ is the low child of $\nu$ (respectively high child of $\nu$)
    if $\delta(\nu, \False) = \tau$ (resp. $\delta(\nu, \True)=\tau$).
\end{definition}

\begin{definition}[Spine of a BDD, tree and non-tree edges]
Let a \BDD\ $B=(Q, I, r, \delta)$ of root-index $k$.
The \emph{spine} of $B$ is the spanning tree obtained by a \emph{depth-first search}
of the (plane) \BDD\ (where low child is accessed before the high one), and omitting the sinks $\Fnode$ and $\Tnode$.
For a \BDD\ $B$, the edges of the spine forms the set
of \emph{tree edges} (drawn in black). The other edges
form the set of \emph{non-tree edges} (drawn in red).
We describe the spine $T$ as a tuple $T=(Q', I, r, \delta')$ with
set of nodes $Q'= Q \setminus\{\Fnode, \Tnode\}$
(with the same index function $I$ as for $B$). 
The edges of the spine are described using a partial transition
function $\delta': Q' \times \{\False, \True\} \to Q'\cup\{\nil\}$
where $\nil$ is a special symbol designating an undefined transition.
\end{definition}
\begin{figure}[h!]
    \centering{\includegraphics[width=.45\textwidth]{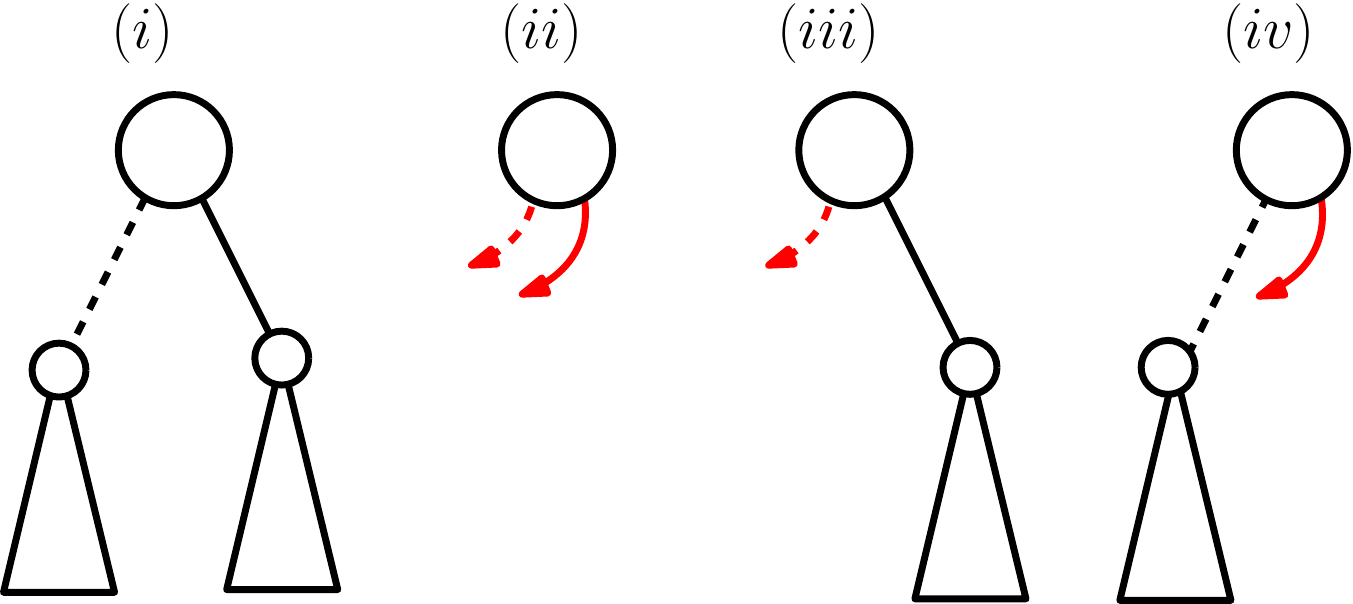}}    
\caption{\label{fig:types} The four cases for a node of a spine. From left to right: an internal node with both transitions defined, two half edges, one low (left) half edge, one high (right) half edge (cf. Proposition~\ref{prop:no-tree})} 
\end{figure} 
Using standard terminology for depth-first search, \emph{non-tree} edges are either \emph{forward or cross} edges. We remark that, by definition, a \textsc{dag} admits no cycles and still in the standard notation, it has no \emph{backward} edges.

Undefined values of the transition function $\delta'$ can conveniently
be seen as half edges. Since in a \BDD\ every non-sink node has two children,
the spine of a \BDD\ of size $n$ has $(n-2)$ nodes and $(n-1)$ half edges
(drawn in red in Fig.~\ref{fig:BDDexample}). The four possible types of a node are depicted, as the roots in Fig.~\ref{fig:types}.

\begin{definition}[Valid tree]
    A binary tree is said to be \emph{valid} if it is the spine of some \BDD.
    The set of spines of size $n$ is denoted as $\mathcal{T}_n$.
\end{definition}
See Fig.~\ref{fig:spines} for examples of valid and invalid trees.
To the best of our knowledge, there is no way to characterize valid trees, apart from exhibiting a \textsc{robdd} admitting this tree as a spine. We will discuss this point later.

For enumerating \BDDs\ it will prove convenient to introduce the profile list of a set of nodes
and some other useful notation for lists manipulation.
\begin{definition}
    The \emph{profile} of $\mathcal{N}$, denoted by $\profile(\mathcal{N})$,
    is a list with $(k+1)$ components $\boldsymbol{p}=(p_0, \dots, p_k)$
    where $k=\max_{\nu\in \mathcal{N}} \idx(\nu)$ is the maximal index and $p_i$
    is the number of nodes of index $i$ in $\mathcal{N}$.
\end{definition}
This definition extends naturally to trees, graphs, etc. We also equip the set of lists with a `$+$' operation:
let two lists $\boldsymbol{v}=(v_0, \dots, v_m)$ and
$\boldsymbol{v}'=(v'_0, \dots, v'_{n})$ with $n\ge m$ (w.l.o.g.), 
the sum $\boldsymbol{v}+\boldsymbol{v}'$ is equal to
$\boldsymbol{w}=(w_0, \dots, w_n)$ where
for all $0 \le i \le m,\;  w_i = v_i+v'_i$ and otherwise, when $m< i \le n,\; w_i=v'_i$.


\medskip
In the following, we will use two orderings on the nodes of a plane \BDD\ induced by depth-first search,
and called postordering and preordering.
Since the structure is plane these orderings correspond exactly with the classical postorder traversal
and the preorder traversal of its spanning tree. 
In a tree, for a node $\nu$ with 
low child $\nu_\False$ and high child $\nu_\True$, the postorder traversal visits the subtree
rooted at $\nu_\False$ then, the one rooted at $\nu_\True$ and finally $\nu$. 
The preorder traversal
first visits the node $\nu$, then the subtree rooted at $\nu_\False$
and finally the subtree rooted at $\nu_\True$.
We use the notation $\mu \postorder \nu$
(resp. $\mu \preorder \nu$) 
if the node $\mu$ is visited before $\nu$ using the postorder (resp. preorder) traversal.

We characterize now how the partial transition function of the spine is 
related to the full transition function of the \BDD. 
Introducing the pool and level set of a node, we describe the valid choices
for non-tree edges to yield a \BDD.
\begin{definition}[Pool and level set]
    Let $T$ be the spine of a \BDD. The pool of a node $\nu \in T$ is
    \[
	    \mathcal{P}_T(\nu) = \left\{ \tau  \in T \mid \tau \preorder \nu \text{ and } \idx(\tau) < \idx(\nu)\right \} \cup \left\{\Fnode, \Tnode \right\}.
    \]
    The pool profile $p_T(\nu)$ of a node $\nu$ in a spine $T$ is
    $p_T(\nu) = \profile(\mathcal{P}_T(\nu))$.\\
    The \emph{level set} of $\nu$ is $\mathcal{S}_T(\nu) = \{ \tau \in T \mid \tau \preorder \nu \text{ and } \idx(\tau) = \idx(\nu)\}$,
    and the \emph{level rank} $s_T(\nu)=\len{\mathcal{S}_T(\nu)}$ of a node $\nu$ is the rank of $\nu$
    among the set of nodes with the same index.
\end{definition}
Informally the pool of a node $\nu$ of a tree $T$ is the set of nodes we could choose
as a low child for $\nu$ without invalidating the spine.
The first component of a pool profile is always~$2$ since both sinks $\Fnode$ or $\Tnode$
are present in the pool of any node of the spine 
(providing the underlying \BDD\ is not reduced to $\True$ or $\False$).

\begin{proposition}
\label{prop:no-tree}
	Let $T=(Q', I, r, \delta')$ be a valid spine with set of nodes $Q'$, root~$r$
	and partial transition function $\delta': Q' \times \{\False, \True\} \to Q' \cup \{\nil\}$.
	The full transition function 
	$\delta: Q' \times \{\False, \True\} \to Q' \cup \{\Fnode, \Tnode\}$ is the
	transition function of a \BDD\ with spine $T$ if and only for any node $\nu\in Q'$,
	noting $\nu_\False=\delta(\nu, \False)$ and $\nu_\True=\delta(\nu, \True)$,
	the pair $(\nu_\False, \nu_\True)$ satisfies
	\begin{itemize}[topsep=0pt]
	    \item[(i)] if $\delta'(\nu, \False)\neq\nil$ and $\delta'(\nu, \True)\neq\nil$
	        then $\nu_\alpha = \delta'(\nu, \alpha)$ for $\alpha \in\{\False, \True\}$.
	    \item[(ii)] if $\delta'(\nu, \False)=\delta'(\nu, \True)=\nil$, then
	        \[
	            \nu_\alpha \preorder \nu \text{ and } \idx(\nu_\alpha) < \idx(\nu) \text{ for } \alpha \in\{\False, \True\} \text{ and }
	            \nu_\False \neq \nu_\True,
	        \]
	        and there is no node $\tau\neq \nu$ with the same index as $\nu$ such that
	        $\delta(\tau, \cdot) = \delta(\nu, \cdot)$.
	    \item[(iii)] if $\delta'(\nu, \False)=\nil$ and $\delta'(\nu, \True)\neq\nil$, then
	        \[
	            \nu_\False \preorder \nu, \ \idx(\nu_\False) < \idx(\nu) \text{ and } \nu_\True = \delta'(\nu, \True).
	        \]
	    \item[(iv)] if $\delta'(\nu, \False)\neq\nil$ and $\delta'(\nu, \True)=\nil$, then
	        $\nu_\False = \delta'(\nu, \False)$ and 
	        \[
	            \nu_\True \postorder \nu, \ \nu_\True \neq \nu_\False \text{ and } \idx(\nu_\True) < \idx(\nu).
	        \]
	\end{itemize}
\end{proposition}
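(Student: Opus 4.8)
The plan is to prove the two implications separately, leaning throughout on two elementary facts about the low-first depth-first search that defines the spine (cf. the discussion around Definition~\ref{def:automaton}): since the underlying graph is a \DAG\ it has no backward edge, so every non-tree edge is a forward or cross edge and therefore, at the instant it is explored, points to a node that is \emph{already finished}; and, by the parenthesis property of depth-first search, two nodes that are incomparable in the spine tree are traversed in disjoint blocks, one entirely preceding the other for both $\preorder$ and $\postorder$. I will also use that the tree edges of a valid spine, being edges of some \BDD, strictly decrease the index; hence the index strictly decreases along every root-to-node path of $T$, the root $r$ is the unique node of maximal index $k$, and no two distinct equal-index nodes are comparable in $T$. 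The apparent asymmetry between cases (iii) and (iv) is explained by the fact that at $\nu$ the search explores the low edge first: a low half-edge fires before anything below $\nu$ is visited, so it lands on a node already finished ($\preorder\nu$), whereas a high half-edge in case (iv) fires only after $\nu$'s necessarily nonempty left subtree has been traversed, so it need only land on a node finished by then ($\postorder\nu$); in case (ii) both edges fire the instant $\nu$ is discovered, whence $\preorder\nu$ for both.

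\emph{($\Rightarrow$)} Given a \BDD\ $B$ with transition $\delta$ and spine $T$, I would first note that, by definition of the spine, the tree edges are exactly the entries where $\delta'$ is defined and there $\delta=\delta'$, which is~(i). For a low half-edge at $\nu$ (cases~(ii),~(iii)) the non-tree edge $\nu\to\nu_\False$ is explored the instant $\nu$ is discovered, so $\nu_\False$ is already finished and $\nu_\False\preorder\nu$; for a low tree edge and a high half-edge at $\nu$ (case~(iv)) the non-tree edge $\nu\to\nu_\True$ is explored only once $\nu$'s left subtree is finished, so $\nu_\True$ finishes before $\nu$, i.e. $\nu_\True\postorder\nu$; in case~(ii) the high edge likewise fires at the discovery of $\nu$, so also $\nu_\True\preorder\nu$. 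The remaining requirements --- $\idx(\nu_\alpha)<\idx(\nu)$, $\nu_\False\neq\nu_\True$, and the non-collision clause in~(ii) --- are verbatim among the defining constraints of a \BDD\ in Definition~\ref{def:automaton}. Together these give (ii)--(iv).

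\emph{($\Leftarrow$)} Assume $\delta$ extends $\delta'$ and satisfies (i)--(iv); I would check the \BDD\ axioms and that the spine is $T$. Every edge of $\delta$ strictly decreases $\idx$ (tree edges because $T$ is a valid spine, the others by the index inequalities of (ii)--(iv)), so the graph is acyclic, and $r$ is its unique node of in-degree $0$ (the sinks are reached, for instance from a node of minimal nonzero index, whose two distinct children are forced to be the two sinks). Moreover $\delta(\nu,\False)\neq\delta(\nu,\True)$: in case~(i) the low and high children of $\nu$ in the tree $T$ are distinct nodes, in cases~(ii),~(iv) this is assumed, and in case~(iii) one has $\nu_\False\preorder\nu\preorder\delta'(\nu,\True)=\nu_\True$. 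Then I would run the low-first depth-first search from $r$ on the graph of $\delta$ and prove by induction on the preorder of $T$ that it reproduces $T$: at $\nu$, a $\delta'$-defined edge leads to the next still-undiscovered node of $T$, whereas each half-edge of $\nu$ points to an already-discovered node --- a low half-edge, and a high half-edge in case~(ii), to a node $\preorder\nu$; a high half-edge in case~(iv) to a node $\postorder\nu$, which is already finished once $\nu$'s left subtree has been processed since $\nu$ has no right subtree in $T$ --- and this target, having strictly smaller index, is not a tree-ancestor of $\nu$, hence is finished and the edge is a harmless forward or cross edge. Thus the depth-first tree is $T$, so the spine of $B=(Q'\cup\{\Fnode,\Tnode\},\idx,r,\delta)$ is $T$. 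Finally, for the reduced constraint, suppose $\nu\neq\mu$ have equal index and $\delta(\nu,\cdot)=\delta(\mu,\cdot)$; then $\nu,\mu$ are incomparable in $T$, so their traversal blocks are disjoint. If either is of type~(ii), clause~(ii) is violated; if on some side both nodes carry a tree edge, equality of those children gives a node of $T$ with two parents, impossible; and the only remaining case, say $\nu$ of type~(iii) and $\mu$ of type~(iv), forces $\delta'(\nu,\True)=\mu_\True$ with $\mu_\True\postorder\mu$ and $\nu_\False=\delta'(\mu,\False)$ with $\nu_\False\preorder\nu$, which, confronted with the disjointness of the blocks of $\nu$ and $\mu$, makes some node be discovered strictly after it is finished --- a contradiction. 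Hence $\delta$ is the transition function of a \BDD\ with spine $T$.

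\emph{Main obstacle.} I expect the delicate point to be the $(\Leftarrow)$ direction: the induction certifying that the low-first search on $\delta$ rebuilds $T$ exactly, so that the half-edges genuinely become non-tree edges and no spurious tree edge is introduced, together with the small interval argument excluding a type-(iii)/type-(iv) collision. Everything else amounts to transcribing the \BDD\ axioms from conditions (i)--(iv).
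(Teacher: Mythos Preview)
Your proof is correct and follows the same case-by-case analysis as the paper's own argument, leaning on the same DFS/preorder/postorder observations. The paper's proof is in fact a terse sketch of the necessity direction only; your treatment of the $(\Leftarrow)$ direction---verifying the \BDD\ axioms of Definition~\ref{def:automaton}, the induction showing the low-first DFS on $\delta$ rebuilds $T$, and especially the case analysis ruling out a type-(iii)/type-(iv) collision via the disjoint-interval argument---fills in what the paper leaves implicit, but does not depart from its approach.
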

\begin{proof}
Since $\delta(\cdot, \cdot)$ must extend  $\delta'(\cdot, \cdot)$, case $(i)$ is trivial
since we must only extend the transition function where $\delta(\nu, \alpha)=\nil$.
In case $(ii)$, we have to choose for $(\nu_\False, \nu_\True)$ two nodes in the pool of $\nu$
($\nu$ is an external node of the spine). We use the preorder traversal
(but since $\nu$ is an external node, the postorder would also be fine).
Moreover $\nu_\False \neq \nu_\True$ and no node with the same index as $\nu$ can have the exact same descendants $(\nu_\False, \nu_\True)$ in accordance with Definition \ref{def:automaton}.
In case $(iii)$, the low child must be chosen in the pool of $\nu$ since we preserve the spine.
In case $(iv)$, the high child of $\nu$ is also chosen in the pool $\nu$ or in the descendants of $\nu_\False$
in the spine $T$ (still different from $\nu_\False$ by Definition \ref{def:automaton}). \Qed
\end{proof}

\section{Counting and Generating BDDs}
	\label{sec:algo}

In this section, we sketch algorithms in order to count and sample \textsc{bdd}s
of a given size $n$ and given number $k$ of variables.
\subsubsection*{Counting BDDs}
Given a spine $T$, we can compute the number
of \BDDs\ corresponding with this spine.
Thus counting \BDDs\ of a certain size $n$ will consists in building
all valid spines of size $(n-2)$ and completing the transition function
of the spine in all possible ways according to Proposition \ref{prop:no-tree}.

\begin{definition}[Weight]
    Let $T=(Q', I, \delta', r)$ be a spine, the \emph{weight} $w_T(\nu)$ of a node $\nu \in Q'$ is the number
    of possibilities for completing the transition function $\delta'(\mu, \cdot)$ and yielding 
    a \BDD\ with spine $T$. 
    The \emph{cumulated weight} of a subtree $T_\nu$ rooted at $\nu\in T$ is
    $W_T(\nu) = \prod_{\tau\in  T_\nu} w_T(\tau)$. We write $W(T)=W_T(r)$ 
    to denote the cumulated weight of the whole spine $T$ rooted at $r$.
\end{definition}
Note that the number of choices for the missing transitions out of a node $\nu$ are the ones
remaining after previous choices have been made for other nodes of the spine.
\begin{proposition}[Weight of a node]
\label{prop:weight_of_a_node}
    Let $T$ be a spine $T$, the \emph{weight} of a node $\nu \in T$ is 
    \[
	    w_T(\nu) =
		    \begin{cases}
		    1 & \text{if $\delta'(\nu, 0)\neq \nil$ and $\delta'(\nu, 1)\neq \nil$}\\
		    \norm{p_T(\nu)}\big(\norm{p_T(\nu)}-1\big) -s_T(\nu) & \text{if $\delta'(\nu, 0)=\delta'(\nu, 1)= \nil$}\\
		    \norm{p_T(\nu)+\profile(T')} & \text{if $\delta'(\nu, 0)\neq \nil$ and $\delta'(\nu, 1)= \nil$}\\
		    \norm{p_T(\nu)}& \text{if $\delta'(\nu, 0)= \nil$ and $\delta'(\nu, 1)\neq \nil$}\\
		    \end{cases}
    \]
    where $p_T(\nu)$ is the pool profile of node $\nu$, $T'=T_{\nu_0}$ is the subtree (when defined) rooted at $\nu_0 = \delta'(\nu, \False)$, and, for a list $\boldsymbol{p}=(p_0, \dots, p_k)$, we denote $\norm{\boldsymbol{p}}= \sum_{i=0}^k p_i$.
\end{proposition}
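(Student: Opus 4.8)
The plan is to establish the weight of a node $\nu$ by a direct case analysis following the four cases of Proposition~\ref{prop:no-tree}, counting in each case exactly how many choices remain for the undefined transitions out of $\nu$ once the choices for all earlier-processed nodes have been fixed. The key is to pin down the processing order: the nodes are completed in preorder (equivalently, we reason about the state of the structure ``just before $\nu$ is processed''), so that for any node $\tau$ with $\idx(\tau)=\idx(\nu)$ and $\tau \preorder \nu$, the pair $(\delta(\tau,\False),\delta(\tau,\True))$ is already determined. I would open the proof by recalling that the admissible pairs $(\nu_\False,\nu_\True)$ are precisely those described in Proposition~\ref{prop:no-tree}, and that $w_T(\nu)$ counts those pairs that are moreover not yet forbidden by the second constraint of Definition~\ref{def:automaton} (no two same-index nodes sharing both children).

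The case work then goes as follows. In case $(i)$ both transitions are fixed by $\delta'$, so there is exactly one completion and $w_T(\nu)=1$. In case $(ii)$, $\nu$ is an external node: both children must lie in the pool $\P_T(\nu)$, they must be distinct, and the ordered pair must avoid the $s_T(\nu)$ ordered pairs already used by the $s_T(\nu)$ earlier same-index nodes (each such node contributes exactly one forbidden ordered pair, and these are pairwise distinct by the constraint applied among those earlier nodes). This gives $\norm{p_T(\nu)}(\norm{p_T(\nu)}-1) - s_T(\nu)$. In case $(iv)$, the low child is fixed to $\nu_\False = \delta'(\nu,\False)$, and the high child ranges over the pool of $\nu$ together with the nodes in the subtree $T' = T_{\nu_\False}$ rooted at the low child (these are the nodes visited in postorder after $\nu$ that are admissible targets, per Proposition~\ref{prop:no-tree}(iv)), excluding $\nu_\False$ itself. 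The count is the total size $\norm{p_T(\nu)+\profile(T')}$: here I would note that excluding $\nu_\False$ is already accounted for because $\nu_\False \notin \P_T(\nu)$ (it has smaller index than $\nu$ but it is not a \emph{strict} predecessor available as a pool element in the relevant sense) and $\nu_\False$ is the root of $T'$ — so this exclusion/inclusion bookkeeping is the one delicate arithmetic point, and I would check it explicitly against Fig.~\ref{fig:types}. In case $(iii)$, the high child is fixed and the low child ranges over the pool, giving $\norm{p_T(\nu)}$; no same-index exclusion applies because the high child being fixed to a spine node already distinguishes $\nu$ from any earlier same-index external node — or more simply, cases $(iii)$ and $(iv)$ never produce a collision with the Definition~\ref{def:automaton} constraint that is not already excluded, which I would justify in a sentence.

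The main obstacle I anticipate is justifying that the counts are independent of the order in which sibling subtrees or unrelated nodes are completed — i.e., that $w_T(\nu)$ as defined (choices ``remaining after previous choices'') is well-defined and equals the stated closed form regardless of how one interleaves the completion of nodes not comparable to $\nu$ in preorder. The resolution is that only same-index nodes $\tau$ with $\tau \preorder \nu$ can constrain $\nu$ (via the Definition~\ref{def:automaton} pairwise-distinctness constraint), and the pool $\P_T(\nu)$ and level rank $s_T(\nu)$ depend only on the spine $T$ and the fixed preorder, not on any completion choices; hence the product $W(T) = \prod_\tau w_T(\tau)$ telescopes correctly over completions processed in preorder. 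I would phrase this as a short lemma-free remark before the case analysis, then dispatch the four cases as above. The external-node collision count in case $(ii)$ — verifying the $s_T(\nu)$ forbidden pairs are genuinely distinct and genuinely forbidden — and the inclusion/exclusion of $\nu_\False$ in case $(iv)$ are the two spots where I would be most careful; everything else is immediate from Proposition~\ref{prop:no-tree}.
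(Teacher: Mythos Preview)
Your approach is exactly the paper's: its entire proof is the single sentence ``This is a direct application of Proposition~\ref{prop:no-tree},'' and your plan simply unpacks that application case by case. The preorder-processing remark and the argument that only case~$(ii)$ incurs same-index exclusions (because in cases~$(iii)$/$(iv)$ at least one child of $\nu$ lies in $\nu$'s own subtree and hence cannot be a child of any earlier same-index node) are correct and more explicit than anything the paper writes.

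One caution on the spot you yourself flag as delicate: in case~$(iv)$ your bookkeeping for excluding $\nu_\False$ does not close as written. You correctly note $\nu_\False\notin\P_T(\nu)$, but $\nu_\False$ \emph{is} the root of $T'$ and hence is counted once in $\profile(T')$; so the raw sum $\norm{p_T(\nu)+\profile(T')}$ still includes $\nu_\False$, and excluding it yields $\norm{p_T(\nu)+\profile(T')}-1$. This is consistent with Algorithm~\ref{algo:count}, which uses $-1+\sum_i p'_i$ for the empty-right-subtree case. So either read the third line of the proposition with the paper's accompanying remark (interpreting $p_T(\nu)+\profile(T')$ as shorthand for the postorder-pool profile) or expect a $-1$; do not try to argue the exclusion away.
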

In the third case, by $p_T(\nu)+\profile(T')$, we mean the profile of the set of nodes visited before $\nu$ with the postorder traversal of $T$ and of index strictly smaller than $\idx(\nu)$.
\begin{proof}
This is a direct application of Proposition \ref{prop:no-tree}. \Qed
\end{proof}

This formula allows to detect if a tree is a valid spine. 
Indeed as soon as the weight of a node is zero or negative, 
there is no way to define a total transition function $\delta$ for a $\BDD$. 
Note that this situation can only happen for external nodes
having two half edges, since for any node $\nu\in Q'$ 
and any spine $T$, $\norm{p_T(\nu)} \ge 2$.
\begin{wrapfigure}[9]{r}{0.55\textwidth}
    \vspace*{-4mm}
    \centering{\includegraphics[width=0.54\textwidth]{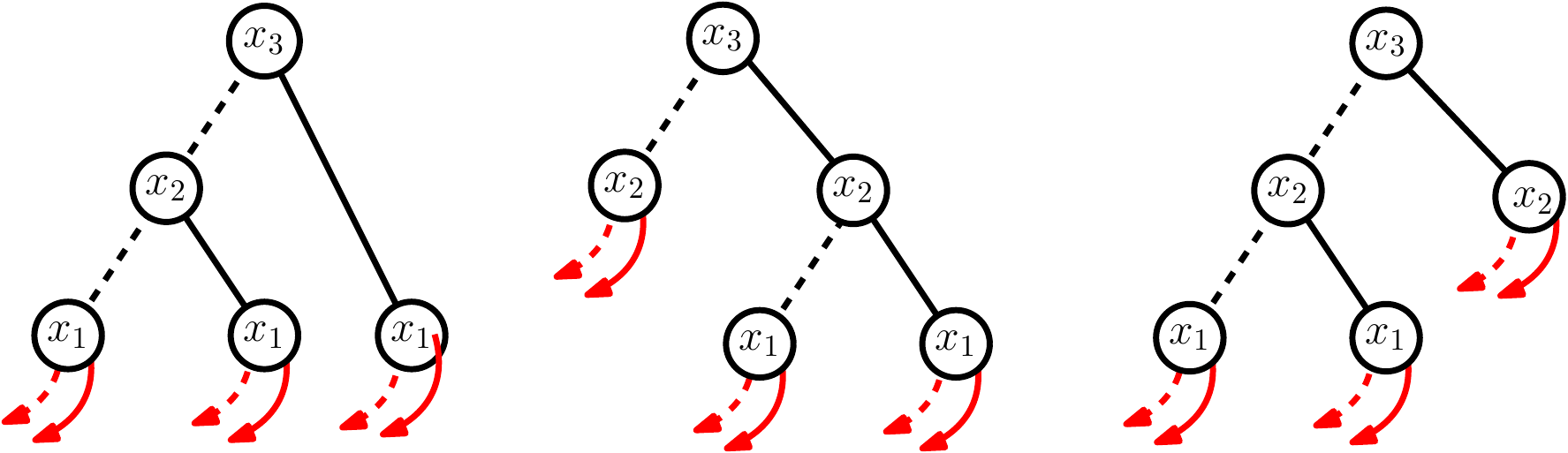}}
    
   \vspace*{-2mm}
    \caption{\label{fig:spines} three examples of binary trees (first one is invalid, the two other have respective weights $4$ and~$24$).}
\end{wrapfigure}
In Fig. \ref{fig:spines}, the binary tree on the left is invalid and cannot be the spine of any \BDD. The two other trees on the right have weights $4$ and $24$, i.e., are resp. the spines of exactly $4$ and $24$ \BDDs. It is an open problem to characterize the set of valid trees (apart from exhibiting corresponding \BDDs).
 
Proposition \ref{prop:weight_of_a_node} gives access to the total weight
of the spine $W(T)$ using a recursive procedure. 
A natural way to proceed
algorithmically is to use a recursive postorder traversal 
of the tree maintaining at each node the weight in a multiplicative manner.
To do so we need to keep track in the traversal of the pool profile and
level rank of the current node.

Initially the pool of the root is reduced to the set $\{\Fnode, \Tnode\}$.
Thus the initial pool profile of the root of index $k$
is initialized to $(2, 0, \dots, 0)$ of length $k$. 
The level rank of the root of the spine is $0$.

\begin{proposition}
\label{prop:count}
    Let $N(n, k)$ be the number of \BDDs\ of index $k$ and size $n$
    \[ \textstyle
    	N(n, k) = \sum_{T \in  \mathcal{T}_{n-2, k}} W(T),
    \]
    where $\mathcal{T}_{m, k}$ is the set of valid spines with $m$ nodes 
    for \BDDs\ of index $k$. 
\end{proposition}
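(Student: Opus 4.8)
The plan is to realize $N(n,k)$ as the cardinality of a set in bijection with pairs $(T,\delta)$, where $T\in\mathcal{T}_{n-2,k}$ is a valid spine and $\delta$ is a completion of its partial transition function $\delta'$ into a total one that, by Proposition~\ref{prop:no-tree}, yields a \BDD\ having $T$ as spine. First I would note that every \BDD\ $B$ of index $k$ and size $n$ has, via the depth-first search defining its spine, a unique spine $T=\mathrm{spine}(B)$; since the two sinks are discarded, $T$ has $n-2$ nodes, so $T\in\mathcal{T}_{n-2,k}$. Conversely, for a fixed $T\in\mathcal{T}_{n-2,k}$, a \BDD\ with spine $T$ is the same thing as an extension of $\delta'$ to a total $\delta$ satisfying conditions $(i)$--$(iv)$ of Proposition~\ref{prop:no-tree} at every node. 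Here one must check that the \BDD\ rebuilt from such a $\delta$ does have spine $T$ (and not another tree): each edge added in the completion points either to a node visited earlier in preorder (cases $(ii)$, $(iii)$, and the low edge in $(iv)$) or, in case $(iv)$, into the low subtree $T_{\nu_0}$ of $\nu$ --- hence already visited when the search returns to the high edge of $\nu$ --- so every added edge is a non-tree edge of the rebuilt \BDD\ and the tree edges are exactly those of $T$. Distinct completions differ in some transition, hence give distinct \BDDs. Therefore
\[
    N(n,k)=\sum_{T\in\mathcal{T}_{n-2,k}} c(T),\qquad c(T)=\#\{\text{valid completions of }T\}.
\]

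It then remains to prove $c(T)=W(T)=\prod_{\nu\in T}w_T(\nu)$. For this I would construct a completion incrementally, scanning the $m=n-2$ nodes of $T$ in postorder $\nu_1\postorder\nu_2\postorder\cdots\postorder\nu_m$, and at step $j$ choosing the missing transition(s) out of $\nu_j$ among those still admissible by Proposition~\ref{prop:no-tree} given the choices already fixed for $\nu_1,\dots,\nu_{j-1}$. The heart of the argument is the claim that the number of admissible choices at step $j$ equals $w_T(\nu_j)$ (as in Proposition~\ref{prop:weight_of_a_node}) and does not depend on the particular choices made at the earlier steps; once this is established, $c(T)=\prod_{j=1}^{m}w_T(\nu_j)=W(T)$ and we are done.

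To prove the claim I would examine the four node types of Proposition~\ref{prop:no-tree}. Types $(i)$, $(iii)$, $(iv)$ are straightforward: the admissible targets form, respectively, a singleton, the pool $\mathcal{P}_T(\nu)$, and the set of nodes preceding $\nu$ in postorder of index $<\idx(\nu)$ (which is what $p_T(\nu)+\profile(T')$ counts, with $T'=T_{\nu_0}$), each set being determined by $T$ alone; one uses that, because we scan in postorder, the automaton ``same index'' constraint linking $\nu$ to another same-index node is irrelevant at $\nu$'s step whenever that node is processed later, and here any conflicting node necessarily is. The delicate case is $(ii)$, an external node with two half edges: one picks an ordered pair $(\nu_\False,\nu_\True)$ of distinct elements of $\mathcal{P}_T(\nu)$ avoiding the child pair of each node of index $\idx(\nu)$ that precedes $\nu$. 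Since for two same-index nodes preorder and postorder agree, these are exactly the $s_T(\nu)$ nodes of $\mathcal{S}_T(\nu)$, i.e.\ precisely those already processed; one then checks that their $s_T(\nu)$ forbidden pairs are pairwise distinct (by the ``same index'' constraint applied inductively to those nodes) and each genuinely lies in $\{(a,b)\in\mathcal{P}_T(\nu)^2:a\ne b\}$, using $\mathcal{P}_T(\tau)\subseteq\mathcal{P}_T(\nu)$ and the fact that the children (tree or not) of a same-index node lying to the left of $\nu$ lie to the left of $\nu$ with smaller index. Hence exactly $\norm{p_T(\nu)}\big(\norm{p_T(\nu)}-1\big)-s_T(\nu)$ pairs survive, regardless of the earlier choices, which is $w_T(\nu)$.

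The main obstacle is precisely this independence, concentrated in case $(ii)$: it is what lets the per-node counts multiply, and it hinges on the two facts above (the forbidden pairs are distinct and genuinely inside the pool square) together with the bookkeeping observation that, for external nodes, the pool, and for all nodes, the level set, can be read off equally well from the postorder traversal --- the order in which the incremental construction actually fixes the half edges. Everything else is routine verification against Proposition~\ref{prop:no-tree} and Proposition~\ref{prop:weight_of_a_node}.
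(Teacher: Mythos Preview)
Your proposal is correct and follows essentially the same approach as the paper, whose proof is a one-liner invoking the definition of weight together with Proposition~\ref{prop:no-tree}. You simply unpack in detail the independence argument (especially the delicate case~$(ii)$) that the paper's definition of $w_T(\nu)$ and the remark preceding Proposition~\ref{prop:weight_of_a_node} take for granted.
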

\begin{proof}
The weight of a spine is the number of ways of extending
the transition function of $T$ (Proposition~\ref{prop:no-tree}),
hence the number of \BDDs\ for this given spine. \Qed
\end{proof}

\subsubsection*{Combinatorial description of spines}
The set of spines is not straightforward to characterize
in a combinatorial way. Indeed we need context to decide if
the weight of a particular node in a tree is $0$ or less, which in turn yields that the tree is not valid .
To enumerate spines, we build recursively binary trees, and, while computing weights for its nodes, as
soon we can decide the (partially built) tree is not valid, the tree is discarded.

To decompose (or count) spines  of any size or index,
$\mathcal{T} = \bigcup_{n\ge 1} \bigcup_{k\ge 1} \mathcal{T}_{n, k}$,
we introduce a partition over subtrees which can occur in a spine
$T \in \mathcal{T}$. The goal is to
identify identical subtrees occurring within different spines and with the 
same weight to avoid redundant computations.

The combinatorial description we are about to present originates
from the following observation: let us fix a spine $T$ and a node $\nu \in T$.
From Proposition~\ref{prop:weight_of_a_node}, to compute the 
cumulated weight of the subtree $T_\nu$ rooted at $\nu$, 
the sole knowledge of the pool profile $p_T(\nu)$ and the level rank 
$s_T(\nu)$ is sufficient.

Let $S$ and $S'$ be two subtrees with respective roots $\nu$ and $\nu'$
in some spines $T$ and $T'$, we denote $S \equiv S'$ if 
the following three conditions are satisfied:
\begin{itemize}[topsep=0pt]
\item both trees have the same size: $\len{S} = \len{S'}$;
\item the roots of both trees have the same pool profile: $p_T(\nu) = p_{T'}(\nu')$;
\item the roots of both trees have the same level rank: $s_T(\nu)=s_{T'}(\nu')$.
\end{itemize}
The set $\mathcal{T}_{m, \boldsymbol{p}, s}$ is the class equivalence
for the relation `$\equiv$' and gathers trees (as a set, without multiplicities)
which are possible subtrees of size $m$ in any spine, 
knowing only the pool profile $\boldsymbol{p}$ and level rank $s$ 
of the root of the subtree. More formally:
\[
\mathcal{T}_{m, \boldsymbol{p}, s} = \{ T_\nu \mid (\exists T \in \mathcal{T}) 
        \; (\exists \nu \in T) \ p_T(\nu) = \boldsymbol{p} \text{ and } s_T(\nu)=s \}.
        \]
Note that we have $ \mathcal{T}_{n, k} = \mathcal{T}_{n, (2, 0,\dots, 0), 0},$ where $(2, 0, \dots, 0)$ has $k$ components.

\begin{proposition}
\label{prop:decompo}
    The set $\mathcal{T}_{m, \boldsymbol{p}, s}$
    of subtrees of size $m$ rooted at a node having pool profile $\boldsymbol{p}=(p_0, \dots, p_{k-1})$ 
    and level rank $s$ occurring in the set of spines $\mathcal{T}$
    is decomposed without any ambiguity.
    We decompose a subtree $T\in \mathcal{T}_{m, \boldsymbol{p}, s}$ as a tuple $(\nu, T', T'')$ where the root $\nu$ has index $k$ and $T'$  and $T''$ are its left and right (possibly empty) subtrees of respective sizes $i$ and $m-1-i$, with $0\le i\le m-1$, and verifying (when non empty)
    \begin{align*}
    \textit{(i)}\ & T' \in \bigcup_{k_0 \in \{1, \dots,k-1\}} \mathcal{T}_{i, (p_0, \dots, p_{k_0 -1}), p_{k_0}}\\
    \textit{(ii)}\ & T'' \in \bigcup_{k_1 \in \{1, \dots,k-1\}}\mathcal{T}_{m-i-1, (p'_0, \dots, p'_{k_1 -1}), p'_{k_1}}, \text{ with $\boldsymbol{p}'=\boldsymbol{p}+\profile(T')$}\\
    \textit{(iii)}\ & \text{if $m=1$ then  $\textstyle\big(\sum_{i=0}^k p_i\big) \cdot \big(-1+\sum_{i=0}^k p_i\big) -s  >0$}.
    \end{align*}
\end{proposition}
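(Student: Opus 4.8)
The plan is to show that the decomposition described is both \emph{well-defined} (the pieces $T'$, $T''$ really do land in the claimed classes) and \emph{unambiguous} (distinct tuples $(\nu,T',T'')$ give distinct trees, and every $T\in\mathcal{T}_{m,\boldsymbol p,s}$ arises from exactly one tuple). Since a subtree of a spine is just a plane binary tree with an index function, the root $\nu$ of $T$ has some index $k$ (the $k$ in the hypothesis $\boldsymbol p=(p_0,\dots,p_{k-1})$), and $T$ splits uniquely into its root together with its left subtree $T'$ and right subtree $T''$; this already gives the uniqueness of the tuple once we know the sizes $i$ and $m-1-i$ are forced, which they are. So the content is entirely in verifying membership conditions (i)--(iii), i.e.\ that the ``context'' (pool profile and level rank) that the children see inside any ambient spine $\widehat T\supseteq T$ with $p_{\widehat T}(\nu)=\boldsymbol p$, $s_{\widehat T}(\nu)=s$ is exactly as stated, and conversely that any trees satisfying (i)--(iii) can be realized.

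First I would recall from the definitions of pool and level set how the context propagates when we move from a node to its children in the preorder traversal. When $\nu$ (index $k$, pool profile $\boldsymbol p=(p_0,\dots,p_{k-1})$, level rank $s$) is visited, its left child $\nu'$ is visited immediately after in preorder; the nodes preorder-before $\nu'$ are exactly those preorder-before $\nu$ plus $\nu$ itself. Hence $\mathcal P_{\widehat T}(\nu') = \bigl(\mathcal P_{\widehat T}(\nu)\setminus\{\text{sinks}\}\cup\{\text{sinks}\}\bigr)$ restricted to indices $<\idx(\nu')$, together with $\nu$ if $\idx(\nu)<\idx(\nu')$ — but $\idx(\nu')=k_0<k=\idx(\nu)$, so $\nu$ does \emph{not} enter the pool of $\nu'$. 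Therefore $p_{\widehat T}(\nu')=(p_0,\dots,p_{k_0-1})$ and its level rank is the number of index-$k_0$ nodes preorder-before $\nu'$, which is exactly $p_{k_0}$ (the count already recorded in $\boldsymbol p$, since all of them have index $<k$ hence were counted in the pool of $\nu$). This is condition (i). For the right child $\nu''$: in preorder it is visited after \emph{all} of $T'$, so the nodes preorder-before $\nu''$ of index $<\idx(\nu)$ are those counted in $\boldsymbol p$ \emph{plus} all nodes of $T'$ (all of which have index $<k$). That is the set whose profile is $\boldsymbol p+\profile(T')=\boldsymbol p'$, and the same argument as before with $\idx(\nu'')=k_1<k$ gives condition (ii). Condition (iii) is the $m=1$ case — $\nu$ is then a leaf with two half-edges — and is simply the requirement $w_{\widehat T}(\nu)=\norm{\boldsymbol p}(\norm{\boldsymbol p}-1)-s>0$ from Proposition~\ref{prop:weight_of_a_node}, which is exactly what must hold for any valid spine; note the displayed formula uses $\sum_{i=0}^k p_i$ but since $p_k$ is not among the components this equals $\norm{\boldsymbol p}$.

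For the converse (realizability), I would argue that given $\nu$, $T'\in\mathcal T_{i,(p_0,\dots,p_{k_0-1}),p_{k_0}}$ and $T''\in\mathcal T_{m-1-i,(p'_0,\dots,p'_{k_1-1}),p'_{k_1}}$ satisfying the constraints, there is a spine $\widehat T$ (and ambient \BDD) in which $T=(\nu,T',T'')$ appears as a subtree with the prescribed pool profile and level rank. This follows by taking witnessing spines for $T'$ and $T''$, observing that the relevant data only depends on counts of lower-index nodes preorder-before the root, and assembling a spine whose root path down to $\nu$ produces exactly the profile $\boldsymbol p$ and rank $s$ — e.g.\ by prepending a suitable ``comb'' of nodes of the requisite indices. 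The main obstacle, and the step deserving the most care, is precisely this bookkeeping of how the preorder position and the index-filtered counts transform across the split: one must be careful that the level rank of a child is read off from the already-accumulated profile $\boldsymbol p$ (resp.\ $\boldsymbol p'$) at the child's index, and that no node of index $\ge k$ ever interferes — which is guaranteed because along the spine indices strictly decrease into subtrees. Once that is pinned down, well-definedness, non-ambiguity (the tuple is recovered from $T$ by reading the root and its two subtrees, with sizes $i=\len{T'}$ determined), and completeness all fall out; I would also remark that the union over $k_0$ (resp.\ $k_1$) is genuinely needed because the index of a child is not determined by $\boldsymbol p$ alone and different children indices give disjoint classes, so no ambiguity is introduced.
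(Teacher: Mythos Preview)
The paper does not actually supply a proof of this proposition: it states the decomposition and follows it only with an explanatory remark that the result lets one compute cumulated weights per equivalence class. Your argument fills this gap in the natural way---tracking how the pool profile and level rank propagate from $\nu$ to its children under the preorder traversal---and the computations you give for (i), (ii), (iii) are correct and match precisely what the paper's definitions of $\mathcal{P}_T(\cdot)$, $p_T(\cdot)$ and $s_T(\cdot)$ force. In particular your observation that $\nu$ itself (having index $k>k_0,k_1$) never enters the pool of either child, and that the level rank of a child is read off as the component $p_{k_0}$ (resp.\ $p'_{k_1}$) of the parent's accumulated profile, is exactly the mechanism underlying Algorithm~\ref{algo:count}.

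The only place where your sketch is looser than the forward direction is the realizability (``converse'') step: the comb construction you invoke does not automatically yield a \emph{valid} spine, since achievability of an arbitrary context $(\boldsymbol{p},s)$ is constrained (e.g.\ there can be at most two index-$1$ nodes in any spine). This is not a genuine obstruction for the paper's purposes, however: the counting in Algorithm~\ref{algo:count} starts from the trivially achievable root context $((2,0,\dots,0),0)$, and from there the recursion only ever visits contexts that are achieved within the very tree being built; moreover any tree with a non-positive leaf weight is discarded. So the bijection you want holds on the nose at the top level and propagates downward, which is all that Proposition~\ref{prop:count} and the algorithm require. Since the paper itself does not address this point, your treatment is at least as complete as the original.
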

This proposition ensures that we can decompose unambiguously subtrees
occurring in spines in accordance with the equivalence relation `$\equiv$'.
Practically this means that instead of considering all possible subtrees 
for all possible spines, we can compute cumulative weights for each 
representative of the equivalence relation 
(which are fewer although still of exponential cardinality).

Algorithm \textnormal{\funcs{count}{$n, \boldsymbol{p}, s$}} in Algorithm~\ref{algo:count} 
enumerates spines of \BDDs\ and,  at the same time, computes their cumulated weights. 
It takes as arguments a size~$n$ for considering all subtrees of size $n$,
assuming an initial pool profile $\boldsymbol{p}=(p_0, \dots, p_{k-1})$, 
level rank $s$ and index $k$ for the root of these trees.
It returns in an associative array 
a list of pairs $(\boldsymbol{t}, w)$ where 
\begin{itemize}[topsep=0pt]
\item
$\boldsymbol{t}=(t_0, t_1, \dots, t_{k-1}, t_k)$ ranges over the set of profiles of trees in $\mathcal{T}_{m, \boldsymbol{p}, s}$, i.e., $\boldsymbol{t} \in\{ \profile(T) \mid T \in \mathcal{T}_{m, \boldsymbol{p}=(p_0, \dots, p_{k-1}), s}\}$.
\item
$w$ is the sum over all equivalent trees of size $m$ with profile 
$\boldsymbol{t}$ of their cumulated weights when the root has pool profile
$\boldsymbol{p}$ and level rank $s$ (which gives enough information
to compute the cumulated weight for each tree 
using Proposition~\ref{prop:weight_of_a_node}).
\end{itemize}
Note that any subtree $T$ with a root of index $i$ has a profile
$\boldsymbol{t}=(t_0, \dots, t_i)$ with $t_0=0$ and $t_i=1$.
\begin{proposition}
The number $N(n, k)$ of \BDDs\ of size $n$ and of index $k$ is computed thanks to Algorithm \textnormal{\funcs{count}{}} and is equal to
\[
N(n, k) = \sum_{(\boldsymbol{t}, w)\in \textnormal{\funcs{count}{$n-2, (2, 0, \dots, 0), 0$}}} w,
\]
where $(2, 0, \dots, 0)$ has $k$ components and corresponds with a pool reduced 
to the two sink nodes $\Fnode$ and $\Tnode$ of index $0$.
\end{proposition}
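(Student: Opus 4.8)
The statement asserts that summing the weights $w$ over all pairs $(\boldsymbol{t}, w)$ returned by $\funcs{count}{n-2, (2,0,\dots,0), 0}$ gives exactly $N(n,k)$. The plan is to chain together three facts already established in the excerpt: Proposition~\ref{prop:count}, which reduces $N(n,k)$ to $\sum_{T\in\mathcal{T}_{n-2,k}} W(T)$; the identity $\mathcal{T}_{n-2,k} = \mathcal{T}_{n-2,(2,0,\dots,0),0}$ noted just after the definition of the equivalence classes; and Proposition~\ref{prop:decompo}, which guarantees that the recursive decomposition used by $\funcs{count}{}$ traverses each spine in a class exactly once. So the only genuine work is to show that $\funcs{count}{n-2, \boldsymbol{p}, s}$ returns, in its output array, for each profile $\boldsymbol{t}$, the value $\sum_{T} W(T)$ where $T$ ranges over spines in $\mathcal{T}_{n-2,\boldsymbol{p},s}$ with $\profile(T)=\boldsymbol{t}$. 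Summing over all $\boldsymbol{t}$ then collapses to $\sum_{T\in\mathcal{T}_{n-2,(2,0,\dots,0),0}} W(T) = N(n,k)$.

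First I would prove, by induction on the size argument $m$ of $\funcs{count}{m, \boldsymbol{p}, s}$, the invariant that the returned array correctly represents $\{(\boldsymbol{t}, \sum_{T} W_T(\text{root}))\}$ aggregated by profile. The base case $m=1$ uses clause \textit{(iii)} of Proposition~\ref{prop:decompo}: a single-node subtree is a valid spine fragment precisely when $\norm{\boldsymbol{p}}(\norm{\boldsymbol{p}}-1)-s>0$, and in that case its cumulated weight is just its node weight $w_T(\nu)$, which by Proposition~\ref{prop:weight_of_a_node} is determined by $\boldsymbol{p}$ and $s$ alone. For the inductive step, a subtree of size $m$ decomposes as $(\nu, T', T'')$ with $|T'|=i$, $|T''|=m-1-i$; the root's weight depends only on $\boldsymbol{p}$, $s$, and (in the one-half-edge cases) on $\profile(T')$, all of which are available at the moment of the recursive calls; the left recursion uses pool profile obtained by truncating $\boldsymbol{p}$ at the chosen index $k_0$ and level rank $p_{k_0}$, per clause \textit{(i)}; the right recursion uses $\boldsymbol{p}' = \boldsymbol{p} + \profile(T')$ truncated at $k_1$ with level rank $p'_{k_1}$, per clause \textit{(ii)}. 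Multiplicativity of cumulated weight ($W_T(\nu) = w_T(\nu)\cdot W_{T'}(\text{root})\cdot W_{T''}(\text{root})$) lets the weights factor through the recursion, and the unambiguity assertion of Proposition~\ref{prop:decompo} guarantees no spine is counted twice and none is missed, so the sums add up correctly.

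The main obstacle I anticipate is bookkeeping the interface between the \emph{abstract} equivalence classes $\mathcal{T}_{m,\boldsymbol{p},s}$ and the \emph{concrete} state (pool profile, level rank) that the algorithm threads through its recursion: one must check that the pool profile and level rank passed to each recursive call genuinely match $p_T$ and $s_T$ of the corresponding child in \emph{every} spine $T$ realizing that subtree, so that the inductive hypothesis applies verbatim. In particular, the left-child's pool is the parent's pool truncated below the child's index (nothing new preorder-precedes the left child except ancestors), while the right child additionally sees all of $T'$ in preorder, which is exactly the $+\profile(T')$ correction; verifying these against the definitions of pool and level set — and that the ``fresh'' contribution to the level rank is the $\boldsymbol{p}$-component at the child's own index — is the delicate part. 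Once this correspondence is nailed down, the rest is the straightforward telescoping described above, and the final equality $N(n,k) = \sum w$ follows by specializing $\boldsymbol{p}=(2,0,\dots,0)$, $s=0$.
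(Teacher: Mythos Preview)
Your proposal is correct and follows essentially the same route as the paper, which relies on Proposition~\ref{prop:count} together with the decomposition of Proposition~\ref{prop:decompo} to conclude that the algorithm sums the weights of all spines of size $n-2$. The paper's own proof is in fact just two sentences that take the correctness of \funcs{count}{} for granted from the surrounding setup, whereas you spell out the inductive invariant and the pool/level-rank bookkeeping explicitly; your version is more detailed but not conceptually different.
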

\begin{proof}
Indeed $\boldsymbol{t}$ ranges over all possible profiles for spines of size $(n-2)$ 
and we sum the weights of all spines for these profiles.
Hence we compute exactly the number of \BDDs\ of size $n$. \Qed
\end{proof}

An important refinement for this algorithm is to remark when summing over all spines, 
we consider subtrees of the same size whose root shares the same pool profile
and same level rank, hence the same context. In order to avoid performing the same exact
computations twice (or more) we can use \emph{memoization} technique (that is storing intermediary results). 
It is an important trick to reduce the time complexity, although at the cost of some
memory consumption.

\subsubsection*{Complexity of the counting algorithm}
First, we remark the numbers involved  in the computations are (very) big numbers (as seen before, of order $2^{2^k}$).
\begin{proposition}
\label{prop:complecity_count}
The complexity (in the number of arithmetic operations)
of the computations of the Algorithm~\ref{algo:count} to evaluate $N(n, k)$ is $O\left(\frac{1}{k} 2^{3k^2 / 2 + k} \right)$.
\end{proposition}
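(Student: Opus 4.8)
The plan is to bound the number of distinct triples $(m, \boldsymbol{p}, s)$ for which the recursive routine \funcs{count}{} is (with memoization) ever invoked, and to argue that the work done per call, measured in arithmetic operations on big integers, is polynomially bounded — so that the dominating factor is the count of distinct memoized calls. First I would observe that every call arising in the evaluation of $N(n,k)$ has a root index $i \le k$, a size $m$ with $1 \le m \le n-2$, and — crucially — a pool profile $\boldsymbol{p} = (p_0, \dots, p_{i-1})$ and level rank $s$ that are constrained by the global structure: the components $p_0, \dots, p_{i-1}$ count nodes of the spine already placed (in preorder) of each index below $i$, together with the two sinks. Since the whole spine has only $n-2 = O(2^k/k)$ nodes and each index class can contain at most a bounded number of nodes relevant to the context, the number of admissible profile vectors is the real combinatorial quantity to estimate.

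The key step is the counting of profiles. For a \BDD\ of index $k$ there are at most $2^{2^j}$ nodes of index $j$ (the number of Boolean functions on $j$ variables), but for the purpose of forming a valid pool the relevant bound is sharper: a node of index $j$ has two children chosen from the pool of nodes of index $< j$, and the number of \emph{distinct} such nodes is at most $\binom{M_j}{2}$ where $M_j$ is the size of that lower pool; unrolling this, the number of nodes of index $j$ that can appear is at most roughly $2^{2^{j-1}}$, and more usefully, the pool profile at any point is a vector $(p_0, \dots, p_{i-1})$ with $p_0 = 2$ and $p_j \le$ (max nodes of index $j$). Counting the vectors: I would bound $\log_2(\text{number of profile vectors for index } i)$ by $\sum_{j=1}^{i-1} \log_2(p_j^{\max}+1)$, and with $\log_2 p_j^{\max} = O(2^{j-1})$ summed geometrically this is $O(2^{i-1})$ — but that overcounts, since the total node budget is only $O(2^k/k)$. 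Re-doing the sum with the constraint $\sum_j p_j \le n-2$ and $p_j \le 2^{2^{j-1}}$: the number of such capped partitions, summed over all root indices $i \le k$ and all sizes $m$, works out (this is the calculation I would carry out carefully) to $O\!\left(\frac{1}{k} 2^{3k^2/2}\right)$; the extra factor $2^k$ comes from the at most $2^k$ possible values of the level rank $s \le k \cdot (\text{max per level})$, or more directly from the range of the additional parameters, giving the stated $O\!\left(\frac{1}{k} 2^{3k^2/2 + k}\right)$.

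The final step is bookkeeping: each memoized call loops over the split point $0 \le i' \le m-1$ of the left/right subtree sizes (at most $n = O(2^k/k)$ values), recursively consults the table for the two children, and for each returned pair $(\boldsymbol{t}, w)$ performs $O(1)$ big-integer additions and one multiplication by the node weight $w_T(\nu)$ from Proposition~\ref{prop:weight_of_a_node} (itself an $O(k)$-size arithmetic expression in the profile entries); the number of distinct profiles $\boldsymbol{t}$ returned is again polynomial in the global bounds, so the per-call cost is absorbed into the polynomial slack and does not change the exponent $3k^2/2 + k$. I expect the main obstacle to be precisely the profile-counting estimate: getting the constant $3/2$ in the exponent right requires carefully tracking that the pool profile components of index $j$ are bounded by the \emph{number of distinct subfunctions} $\le 2^{2^{j-1}}$ rather than the naive $2^{2^j}$, and then summing $\sum_{j \le k} 2^{j-1} = 2^{k}-1$ against the $k$-fold choice of indices and the capped-partition constraint — a delicate interplay that is easy to get off by a constant factor in the exponent. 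The rest is routine.
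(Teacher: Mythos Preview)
The paper does not actually supply a proof of this proposition: it is stated and immediately followed only by the remark comparing $2^{3k^2/2}$ to $2^{2^k}$, and later (in the discussion of the unranking proposition) by the single hint that ``the number of profiles is of order $2^{k^2/2}$.'' So there is no detailed paper argument to compare against; I can only assess your plan on its own terms, and it has two genuine gaps.

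\textbf{Profile count.} You bound the $j$-th component of a pool profile by the number of distinct subfunctions, $p_j \le 2^{2^{j-1}}$, and then hope that the global size cap $\sum_j p_j \le n-2$ brings the product down to $2^{O(k^2)}$. It does not. The estimate that actually yields the $k^2/2$ exponent (and which the paper uses implicitly when it says the number of profiles is $2^{k^2/2}$) is the \emph{top-down} constraint $p_j \le 2^{k-j}$: from a root of index $k$ at most $2^{k-j}$ nodes of index $j$ are reachable along spine edges. Then $\prod_{j=0}^{k-1}\bigl(1+\min(2^{k-j},2^{2^j})\bigr)=2^{k^2/2+O(k)}$, with the top-down factor dominant for all $j\gtrsim\log_2 k$. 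Your bottom-up bound alone, even combined with $\sum_j p_j\le n$, does not produce a polynomial-in-$k$ exponent, and your sentence ``works out \ldots\ to $O\!\left(\frac{1}{k}2^{3k^2/2}\right)$'' is exactly the step that would fail if carried out.

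\textbf{Per-call cost.} You claim that ``the number of distinct profiles $\boldsymbol t$ returned is again polynomial in the global bounds, so the per-call cost is absorbed into the polynomial slack.'' This is the more serious error. The dictionary returned by \funcs{count}{} is keyed by \emph{tree profiles}, of which there are up to $2^{k^2/2}$, and the body of \funcs{count}{} runs a nested loop over pairs $(\boldsymbol\ell,w_0)\in d_0$, $(\boldsymbol r,w_1)\in d_1$ drawn from two such dictionaries --- that is up to $2^{k^2}$ pairs per split point, each incurring a big-integer multiplication $w_0\cdot w_1$ (not, as you wrote, a single node-weight multiplication). This double loop is precisely where the remaining $k^2$ in the exponent $3k^2/2=k^2/2+k^2$ comes from: roughly $2^{k^2/2}$ memoized pool profiles times $2^{k^2}$ work inside each call, with the residual $2^k/k$ accounting for the size and split parameters. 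If you treat the per-call work as polynomial you undercount by a factor of $2^{k^2}$ and cannot arrive at the stated bound.
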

For Boolean functions in $k$ variables, although the time complexity of our algorithm is of exponential growth
$2^{3k^2 / 2}$. However the state space of Boolean functions is $2^{2^k}$ thus our computation is still much better than the exhaustive construction.


\begin{algorithm}[htb]
\begin{minipage}{\textwidth}
\begin{algorithmic}
\Function{count}{$n, \boldsymbol{p}=(p_0, \dots, p_{k-1}), s$}
\State $d\gets \{ \,\}$ \Comment{Empty dictionary}
  \For{$i \gets 0$ to $n-1$} \Comment{Left/right subtrees of size $i$/$n-i-1$}
    \State $d_0\gets \{ \, \}$
    \If{$i=0$} 
      $d_0\gets \left\{\boldsymbol{\epsilon} : \sum_{i=0}^{k-1} p_i \right\}$ \Comment{left subtree is empty, see $^*$}
    \Else
      \For{$k_0\gets 1$ to $k-1$} \Comment{left node has index $k_0$}
        \State $d_0 \gets d_0 \cup \text{\funcs{count}{$i, (p_0, \dots, p_{k_0-2}), p_{k_0-1}$}}$
      \EndFor
    \EndIf
    \For{$(\boldsymbol{\ell}, w_0) \gets d_0$}
      \State $d_1\gets \{ \, \}$
      \State $\boldsymbol{p'} \gets \boldsymbol{p}+\boldsymbol{\ell}$
      \If{$n-1-i=0$}
        $d_1 \gets d_1 \cup \left\{ \boldsymbol{\epsilon} : -1+\sum_{i=0}^{k-1} p'_i \right\}$  \Comment{right subtree is empty}
      \Else\
        \For{$k_1\gets 1$ to $k-1$} \Comment{right node has index $k_1$}
          \State $d_1\gets d_1 \cup \text{\funcs{count}{$n-1-i, (p'_0, \dots, p'_{k_1-2}), p'_{k_1-1}$}}$
        \EndFor
      \EndIf
      \For{$(\boldsymbol{r}, w_1) \gets d_1$}
        \State $w \gets w_0 \cdot w_1$
        \If{$n=1$} $w \gets w - s$
        \EndIf
        \If{$w>0$}
          \State $\boldsymbol{t} \gets \boldsymbol{\ell}+\boldsymbol{r}+\boldsymbol{e}^{(k)}$ \Comment{index profile of the subtree}
          \If{$\boldsymbol{t} \in d$} 
            $d[\boldsymbol{t}] \gets d[\boldsymbol{t}]+w$ \Comment{update if $\boldsymbol{t}$ is already a key in $d$}
          \Else\ 
            $d\gets d \cup \left\{\boldsymbol{t}: w \right\}$ \Comment{$\boldsymbol{t}$ is a new key in $d$}
          \EndIf
        \EndIf
      \EndFor
    \EndFor
  \EndFor

\State \Return $d$
\EndFunction
\end{algorithmic}
\smallskip
\scriptsize{$^*$
For an integer $k\ge 0$, the list $\boldsymbol{e}^{(k)}=(0, \dots, 0, 1)$
is the list with $(k+1)$ components where the last entry is $1$ and all
others are $0$. The empty list of size $0$ is denoted~$\boldsymbol{\epsilon}$.
}
\caption{\label{algo:count} Algorithm \funcs{count}{}. The initial pool profile of the root (of index $k$) is $(2, 0, \dots, 0)$ of length $k$.}
\end{minipage}
\end{algorithm}
%


\subsubsection*{Unranking BDDs}

Using the classical recursive method for the generation of structures~\cite{WN89}
we base our generation approach on the combinatorial counting approach.
Since the class of objects under study 
seems not admissible in the sense given in Analytic Combinatorics~\cite{FlajoletSedgewick2009}, we cannot directly
apply the advanced techniques presented in~\cite{FZC94} nor the approaches
by Mart{\'i}nez and Molinero~\cite{MM01}.
Thus we devise an unranking algorithm for \BDDs\ 
and get as by-products algorithms for uniform random sampling
and exhaustive generation.

The ranking/unranking techniques for objects of a combinatorial class $\mathcal{C}$ 
of size $N$ consists in building a bijection between any $c\in \mathcal{C}$ and an integer
(its \emph{rank}) in the interval $[0 \pp N-1]$ (if we starts from $0$).
This leads trivially to a uniform sampling algorithm by drawing uniformly first an integer and then building the corresponding object.
%

\begin{proposition}
    Once the pre-computations are done, the unranking (or uniform random sampling) algorithm
    needs $O\left(n \cdot | \mathcal{T}_{n, k} | \right)$ arithmetic operations to build a
    \BDD\ of index $k$ and size $n$.
\end{proposition}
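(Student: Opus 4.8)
The plan is to establish correctness first and then bound the running time, using that a \BDD{} of size $n$ and index $k$ splits uniquely into a pair $(T,\phi)$: a spine $T\in\mathcal{T}_{n-2,k}$ (a size-$n$ \BDD{} has an $(n-2)$-node spine, so the $\mathcal{T}_{n,k}$ of the statement should here be read as $\mathcal{T}_{n-2,k}$) and an admissible completion $\phi$ of its partial transition function. By Propositions~\ref{prop:no-tree} and~\ref{prop:count} there are exactly $W(T)=\prod_{\tau\in T}w_T(\tau)$ such completions, so $N(n,k)=\sum_{T\in\mathcal{T}_{n-2,k}}W(T)$; hence, by the usual cumulative-sum identification, a rank $\rho\in[0\pp N(n,k)-1]$ corresponds to a unique pair consisting of a spine $T^\star$ and a residual rank $\rho'\in[0\pp W(T^\star)-1]$ encoding its completion. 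The unranking procedure recovers $T^\star$ first and then $\phi$.

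For the \emph{spine phase} I would rely on the unambiguous decomposition of Proposition~\ref{prop:decompo}, whose key feature is that the cumulated weight of a subtree depends only on its equivalence class $(m,\boldsymbol{p},s)$ and its shape, never on the surrounding spine; the aggregated weights tabulated by Algorithm~\ref{algo:count} are therefore exactly what is needed. Starting at the root with context $\boldsymbol{p}=(2,0,\dots,0)$, $s=0$ and rank $\rho$, at each visited node one enumerates in a fixed order the admissible tuples $(i,k_0,\boldsymbol{\ell},k_1,\boldsymbol{r})$ (left size, left index, left profile, right index, right profile), reads the corresponding block size $w_0\cdot w_1$ (with the $-s$ correction when $m=1$) from the memo table, accumulates the block sizes until the one containing $\rho$ is reached, splits the residual rank as $\rho_{\text{left}}\cdot w_1+\rho_{\text{right}}$, and recurses into the two children with the updated contexts. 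This reconstructs $T^\star$ and, along the way, all the weights $w_{T^\star}(\tau)$.

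The \emph{completion phase} is a mixed-radix expansion: processing the $n-2$ nodes of $T^\star$ in the order underlying Proposition~\ref{prop:weight_of_a_node}, a chain of $n-2$ Euclidean divisions extracts a digit $c_\tau<w_{T^\star}(\tau)$ at each node, and one scan of the pool of $\tau$ (discarding the diagonal pair and the at most $s$ forbidden pairs stemming from equal-index nodes) turns $c_\tau$ into the concrete low and high targets. This costs only a small polynomial in $n$ and is therefore dominated by the spine phase.

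It then remains to bound the spine phase by $O(n\cdot|\mathcal{T}_{n-2,k}|)$. The quickest route is a comparison with the naive strategy that enumerates the valid spines of size $n-2$ with the exhaustive generator while accumulating their cumulated weights $W(T)$ until $\rho$ is covered: this visits at most $|\mathcal{T}_{n-2,k}|$ spines, and since consecutive spines in an exhaustive traversal share long prefixes, their pool profiles, level ranks and node weights update incrementally, so each spine is handled in $O(n)$ amortized time, giving $O(n\cdot|\mathcal{T}_{n-2,k}|)$. The main obstacle I anticipate is making this airtight for the recursive descent actually used rather than for the naive enumeration: one must argue that the branching over left and right profiles at each of the $n-2$ spine nodes never inflates the total past $O(n\cdot|\mathcal{T}_{n-2,k}|)$, which calls for a careful amortization over the recursion tree; a secondary nuisance is that counting ``arithmetic operations'' suppresses the $O(k)$-length list manipulations on profiles, which must be either absorbed into the constant or carried as an extra factor $k$.
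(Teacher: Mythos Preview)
Your proposal is correct and considerably more detailed than the paper's own justification, which is only a sketch. The one structural difference worth noting is that the paper's unranking first enumerates \emph{profiles} to locate the profile of the target spine, and only then descends recursively, bounding the work by ``for each of the $n$ nodes, we traverse at most all spines with this profile''; you instead descend directly via the decomposition of Proposition~\ref{prop:decompo} and justify the bound by comparison with a naive spine-by-spine enumeration, each spine handled in $O(n)$ amortized time. These are dual accountings of the same $O(n\cdot|\mathcal{T}_{n,k}|)$ product: the paper multiplies ``nodes $\times$ spines'', you multiply ``spines $\times$ nodes''. Your honest remark that the amortization for the actual recursive descent is not fully airtight applies equally to the paper's per-node argument, which is left at the same level of informality; neither version pins down the constant or the $O(k)$ profile-manipulation overhead you flag.
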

First, remark that the worst case happens when $n$ is of order the largest
possible size of a \BDD\ over $k$ variables $O(\frac{2^{k/2}}{k})$ (cf.~\cite[p. 102]{Knuth11}) which corresponds to the generic case according to Fig.~\ref{fig:distribution}.
Furthermore, the number of profiles is of order $2^{\frac{k^2}{2}}$.
To generate a \BDD\ given its rank, we first identify the correct profile of its spine (by enumeration).
Then according to this target profile, recursively, for each node, we traverse at most all spines with this profile,
in order to decompose the substructures in its left and right part, yielding the upper bound.

\par\smallskip

\emph{As a conclusion,
note the process of enumerating, counting and sampling we introduced can be adapted 
to subclasses of functions (for instance those for which all variables are essential), 
but also to other strategies of compaction, like those used for Quasi-Reduced \BDDs\ and  Zero-suppressed \BDDs.
A natural question is also to provide an algorithm enumerating valid spines and not all invalid ones as well to get more efficient enumeration and unranking algorithms for \textsc{robdd}s.
These questions will be addressed in future work.}

\par\medskip
\noindent\textbf{Acknowledgement.}
We thank the anonymous reviewers whose comments and suggestions helped improve and clarify this manuscript.


%

\bibliographystyle{splncs04}
\bibliography{sections/biblio}

\end{document}